\def\BibTeX{{\rm B\kern-.05em{\sc i\kern-.025em b}\kern-.08em
    T\kern-.1667em\lower.7ex\hbox{E}\kern-.125emX}}
\def\BState{\State\hskip-\ALG@thistlm}
\def\x{{\underline{x}}}
\def\G{{\mathcal{G}}}
\def\A{{\mathbf{A}}}
\def\L{{\mathbf{L}}}
\def\S{{\mathbf{S}}}
\def\C{{\mathbf{C}}}
\def\B{{\mathbf{B}}}
\def\P{{\mathbf{P}}}
\def\U{{\mathbf{U}}}
\def\X{{\mathbf{X}}}
\def\Y{{\mathbf{Y}}}
\def\D{{\mathbf{D}}}
\DeclareMathOperator*{\argmin}{arg\,min} 
\newtheorem{theorem}{Theorem}
\theoremstyle{definition}
\newtheorem{definition}{Definition}
\theoremstyle{definition}
\newtheorem{problem}{Problem}
\begin{document}

\title{Piecewise Stationary Modeling of Random Processes Over Graphs With an Application to Traffic Prediction}

\author{\IEEEauthorblockN{Arman Hasanzadeh, Xi Liu, Nick Duffield and Krishna R. Narayanan} \IEEEauthorblockA{Department of Electrical and Computer Engineering \\
Texas A\&M University \\
College Station, Texas 77840\\
Email: \textit{\{armanihm, duffieldng, krn\}@tamu.edu, xiliu.tamu@gmail.com}}}

\maketitle

\begin{abstract}
Stationarity is a key assumption in many statistical models for random processes. With recent developments in the field of graph signal processing, the conventional notion of wide-sense stationarity has been extended to random processes defined on the vertices of graphs. It has been shown that well-known spectral graph kernel methods assume that the underlying random process over a graph is stationary. While many approaches have been proposed, both in machine learning and signal processing literature, to model stationary random processes over graphs, they are too restrictive to characterize real-world datasets as most of them are non-stationary processes. In this paper, to well-characterize a non-stationary process over graph, we propose a novel model and a computationally efficient algorithm that partitions a large graph into disjoint clusters such that the process is stationary on each of the clusters but independent across clusters. We evaluate our model for traffic prediction on a large-scale dataset of fine-grained highway travel times in the Dallas--Fort Worth area. The accuracy of our method is very close to the state-of-the-art graph based deep learning methods while the computational complexity of our model is substantially smaller.
\end{abstract}

\begin{IEEEkeywords}
Piecewise Stationary, Graph Clustering, Graph Signal Processing, Traffic Prediction
\end{IEEEkeywords}

\section{Introduction}
Stationarity is a well-known hypothesis in statistics and signal processing which assumes that the statistical characteristics of random process do not change with time \cite{papoulis1965probability}. Stationarity is an important underlying assumption for many of the common time series analysis methods \cite{dicky1979unitroot, Brockwell2016ts, Cryer2008ts}. With recent developments in the field of graph signal processing \cite{shuman2013emerging, moura2014bigdata,ortega2018graph}, the concept of stationarity has been extended to random processes defined over vertices of graphs \cite{loukas2016stationary, marques2017stationary, perraudin2017tstationary, perraudin2017sspog}.
A random process over a graph is said to be graph wide-sense stationary (GWSS) if the covariance matrix of the process and the shift operator of the graph, which is a matrix representation of the graph (see Section \ref{sec: stationary}), have the same set of eigenvectors.

Incidentally, GWSS is the underlying assumption of spectral graph kernel methods which have been widely used in machine learning literature to model random processes over graphs \cite{smola2003kernels,sollich2009kernelgp,zhu2005nonparametric,urry2013random,avrachenkov2018similarities}. The core element in spectral kernel methods is the kernel matrix that measures similarity between random variables defined over vertices. This kernel matrix has the same set of eigenvectors as the Laplacian matrix (or adjacency matrix) while its eigenvalues are chosen to be a function of eigenvalues of the Laplacian matrix (or adjacency matrix). Therefore, the kernel matrix and the shift operator of the graph share eigenvectors which is the exact definition of GWSS.

While the stationarity assumption has certain theoretical and computational advantages, it is too restrictive for modeling real-world big datasets which are mostly non-stationary. In this paper, we propose a model that can deal with non-stationary covariance structure in random processes defined over graphs. The method we deploy is a novel and computationally efficient graph clustering algorithm in which a large graph is partitioned into smaller disjoint clusters. The process defined over each cluster is stationary and assumed to be independent from other clusters. To the best of our knowledge, our proposed clustering algorithm, stationary connected subgraph clustering (SCSC), is the first method to address this problem. 
Our model renders it possible to use highly-effective prediction techniques based on stationary graph signal processing for each cluster \cite{loukas2017evolution}. An overview of our proposed clustering algorithm is shown in Fig. \ref{fig: method}.

\begin{figure*}[t]\label{fig: method}
\centering
\includegraphics[width=\textwidth, keepaspectratio]{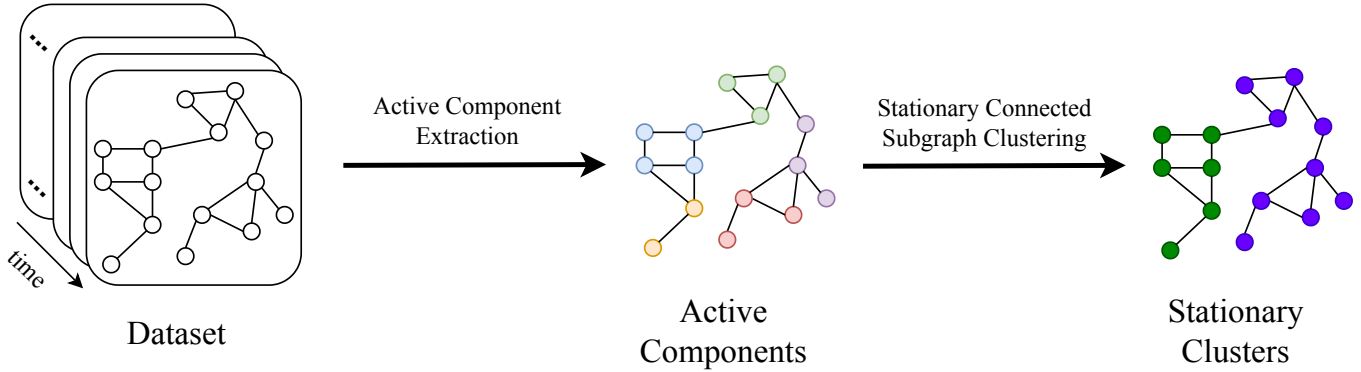}
\caption{An overview of our proposed stationary clustering algorithm. Given historical observation, i.e. time-series defined over nodes of a graph, first we extract active components. Next, our proposed stationary connected subgraph clustering (SCSC) tries to merge adjacent active components to form stationary subgraphs.}
\label{fig: method}
\end{figure*}

Our work is analogous to piecewise stationary modeling of random processes in continuous space, which has been studied in the statistics literature. However, these methods cannot be extended to graphs due to discrete nature of graphs or the fact that the definition of GWSS is not \textit{inclusive} (Section \ref{sec: challenge}). For instance, to model a non-stationary spatial process, Kim \textit{et al.} \cite{kim2005analyzing} proposed a Bayesian hierarchical model for learning optimal Voronoi tessellation while Gramacy \cite{gramacy2005bayesian} proposed an iterative binary splitting of space (treed partitioning). Piecewise stationary modeling has also been explored in time series analysis which is mainly achieved by detecting change points \cite{appel1983adaptive,last2008detecting}. Recently, there has been an effort in graph signal processing literature to detect non-stationary vertices (change points) in a random process over graph \cite{girault2017local,serrano2018graph}, however, to achieve that, authors introduce another definition for stationarity, called local stationarity, which is different from the widely-used definition of GWSS.

We evaluate our proposed model for traffic prediction on a large-scale real-world traffic dataset consisting of 4764 road segments and time-series of average travel times on each segment at a two-minute granularity over January--March 2013. We use simple piecewise linear prediction models within each cluster. The accuracy of our method is very close to the state-of-the-art deep learning method, i.e. only 0.41\% difference in the mean absolute percentage error for 10 minute predictions, rising to 0.66\% over 20 minutes. However, the learning time of our proposed method is substantially smaller, being roughly 3 hours on a personal computer while the deep learning method takes 22 hours on a server with 2 GPUs. Our contribution can be summarized as follows:
\begin{itemize}
    \item For the first time, we propose to model non-stationary random processes over graphs with piecewise stationary random processes. To that end, we propose a novel hierarchical clustering algorithm, i.e. stationary connected subgraph clustering, to extract stationary subgraphs in graph datasets.
    \item We theoretically prove that the definition of GWSS is not \textit{inclusive}. Hence, unlike many of the well-known graph clustering algorithms, adding one vertex at each training step to clusters in an stationary clustering algorithm is bound to diverge. To address this issue, we propose to use \textit{active components} as input to stationary clustering algorithm instead of vertices. 
    \item We evaluate our proposed model for traffic prediction in road networks on a large-scale real-world traffic dataset. We deploy an off-the-shelf piecewise linear prediction model for each subgraph, independently. Our method shows a very close performance to the state-of-the-art graph based deep learning methods while the computational complexity of our model is substantially smaller.
\end{itemize}

\section{Background}
\subsection{Spectral Graph Kernels}
Kernel methods, such as Gaussian process and support vector machine, are a powerful and efficient class of algorithms that have been widely used in machine learning \cite{smola2003kernels}. The core element in these methods is a positive semi-definite kernel function that measures the similarity between data points. For the data defined over vertices of a graph, spectral kernels are defined as a function of the Laplacian matrix of the graph \cite{smola2003kernels}. More specifically, given a graph $\mathcal{G}$ with Laplacian matrix $\mathbf{L}$, the kernel matrix $\mathbf{K}$ is defined as
\begin{equation}
    \mathbf{K} = \sum_{i=1}^{N} r(\lambda_i) \, \underline{u_i} \, \underline{u_i}^T.
\label{equ:spec_ker}
\end{equation}
Here, $N$ is the number of vertices of the graph, $\lambda_i$ and $\underline{u}_i$ are the $i$-th eigenvalue and eigenvector of $\mathbf{L}$, respectively, and the spectral transfer function $r : \mathbb{R}^{+} \rightarrow \mathbb{R}^{+}$ is a non-negative and decreasing function. 
Non-negativeness of $r$ assures that $\mathbf{K}$ is positive semi-definite and its decreasing property guarantees that the kernel function penalizes functions that are not smooth over graph.
Many of the common graph kernels can be derived by choosing a parametric family for $r$. For example, $r(\lambda_i) = \exp (-\lambda_i / 2\sigma^2)$ result in the well-known heat diffusion kernel.
 
Although in the majority of the previous works, spectral graph kernels are derived from the Laplacian matrix, they can also be defined as a function of the adjacency matrix of the graph \cite{avrachenkov2018similarities}. 
Adjacency based spectral kernels have the same form as \eqref{equ:spec_ker} except that the transfer function $r$ is a non-negative and ``increasing" function since eigenvectors of adjacency with \textit{large} eigenvalues have smoother transitions over graph \cite{moura2014freqanal}. We note that unlike Laplacian matrix, eigenvalues of the adjacency matrix could be negative, therefore the domain of transfer function is $\mathbb{R}$ for adjacency based kernels.

Spectral graph kernels have been widely used for semi-supervised learning \cite{zhu2005nonparametric}, link prediction \cite{kunegis2009learning} and Gaussian process regression on graphs \cite{sollich2009kernelgp,urry2013random}. In the next subsection, we show the connection between spectral graph kernels and stationary stochastic processes over graphs.

\subsection{Stationary Graph Signal Processing}\label{sec: stationary}
Graph signal processing (GSP) is a generalization of classical discrete signal processing (DSP) to signals defined over vertices of a graph (also known as \textit{graph signals}). 
Consider a graph $\mathcal{G}$ with $N$ vertices and a graph signal defined over it denoted by a $N$-dimensional vector $\underline{x}$. The graph frequency domain is determined by selecting a \textit{shift operator} $\mathbf{S}$, i.e. a $N \times N$ matrix that respects the connectivity of vertices. Applying the shift operator to a graph signal, i.e. $\mathbf{S} \, \underline{x}$, is analogous to a circular shift in classical signal processing. The Laplacian matrix $\mathbf{L}$ and the adjacency matrix $\mathbf{A}$ are two examples of the shift operator. Let us denote the eigenvalues of $\mathbf{S}$ by $\{\lambda_i\}_{i=1}^{N}$ and its eigenvectors by $\{\underline{u}_i\}_{i=1}^{N}$.  Also, assume that $\mathbf{\Lambda}=\text{diag}(\{\lambda_i\}_{i=1}^{N})$ and $\mathbf{U}$ is a matrix whose columns are $\underline{u}_i$s.
The graph Fourier transform (GFT) \cite{shuman2013emerging,moura2014bigdata} is defined as follows:

\begin{definition} [Graph Fourier transform \cite{marques2017stationary}]
Given graph $\mathcal{G}$ with shift operator $\mathbf{S}$ and a graph signal $\underline{x}$ defined over $\mathcal{G}$, the Graph Fourier Transform (GFT) of $\underline{x}$, denoted by $\hat{\underline{x}}$, is defined as 
\begin{equation}
\hat{\underline{x}} = \mathbf{U}^T \, \underline{x}.
\label{def: gft}
\end{equation}
The inverse GFT of a vector in the graph frequency domain is given by 
\begin{equation}
\underline{x} = \mathbf{U} \, \hat{\underline{x}}.
\label{def: igft}
\end{equation}
\end{definition}

It can be shown that when the Laplacian matrix is used as the shift operator, smoothness of eigenvectors over the graph is proportional to their corresponding eigenvalue \cite{shuman2013emerging}, and if the adjacency matrix is used as the shift operator, smoothness of eigenvectors over the graph is inversely proportional to their corresponding eigenvalues \cite{moura2014freqanal}. This ordering of eigenvectors provides the notion low-pass, band-pass and high-pass filters for graphs.
More specifically, filtering graph signals is defined as shaping their spectrum with a function \cite{tremblay2018design,sandryhaila2013discrete,liu2018filter}.
\begin{definition} [Graph filters \cite{marques2017stationary}]
Given a graph $\mathcal{G}$ with shift operator $\mathbf{S}$, any function $h : \mathbb{C} \rightarrow \mathbb{R}$ defines a graph filter such that
\begin{equation}
    \mathbf{H} = \mathbf{U} \, h(\mathbf{\Lambda}) \, \mathbf{U}^{T}.
\end{equation}
The filtered version of a graph signal $\underline{x}$ defined over $\mathcal{G}$ is given by $\mathbf{H} \, \underline{x}$.
\label{def: gf}
\end{definition}
\noindent An example of graph filter is the low pass diffusion filter. Assuming the Laplacian matrix is the shift operator, low pass diffusion filter $h$ is defined as $h(\lambda_i) = \exp (-\lambda_i / 2\sigma^2)$. 

An important notion in classical DSP is stationarity of stochastic processes which is the underlying assumption for many of the well known time series analysis methods. 
Stationarity of a time series means that the statistical properties of the process do not change over time. More specifically, a stochastic process is wide-sense stationary (WSS) in time if and only if its first and second moment are shift invariant in time. An equivalent definition of WSS process is that the eigenvectors of covariance matrix of a WSS process are the same as columns of discrete Fourier transform (DFT) matrix.
By generalizing the definition of WSS to stochastic processes defined on graphs, graph wide-sense stationary (GWSS) \cite{perraudin2017sspog,marques2017stationary} is defined as follows:

\begin{definition} [Graph wide-sense stationary \cite{marques2017stationary}]
A stochastic process $\underline{x}$ with covariance matrix $\mathbf{C}$ defined over vertices of a graph $\mathcal{G}$ with shift operator $\mathbf{S}$ is GWSS if and only if $\mathbf{C}$ is jointly diagonalizable with $\mathbf{S}$, i.e. $\mathbf{C}$ and $\mathbf{S}$ have the same set of eigenvectors. Equivalently, the stochastic process $\underline{x}$ is GWSS if and only if it can be produced by filtering white noise using a graph filter.
\label{def: gwss}
\end{definition}

Spectral graph kernel methods, discussed in the previous subsection, model the covariance matrix of the process over graph with the kernel matrix. The kernel matrix have the same set of eigenvectors as shift operator; hence, these methods assume that the random process they are operating on is GWSS.
It is also worth noting that GWSS reduces to WSS in case of time series. Assume that time series is a graph signal defined over a directed ring graph with adjacency matrix as the shift operator.
Multiplying the adjacency matrix with the vector of time series (applying shift operator) results in the circular shifted version of the time series by one. Therefore, definition of graph filters reduces to discrete filters in classical DSP. Moreover, the adjacency matrix of the directed ring graph is a circulant matrix whose eigenvectors are the same as the vectors that define the discrete Fourier transform. Hence, both definitions of GWSS reduce to classical definitions of WSS time series.

While GWSS is defined for random processes over graphs, in many practical scenarios, the signal on each vertex is a time series itself (time-varying graph signals). Stationarity can be further extended to time varying random processes over graphs. More specifically, a jointly wide-sense stationary (JWSS) process \cite{loukas2016stationary} is defined as follows: 

\begin{definition} [Joint wide-sense stationarity \cite{loukas2016stationary}]
A time-varying random process $\mathbf{X} = [\underline{x}^{(1)} \, \dots \, \underline{x}^{(T)}]$ defined over graph $\mathcal{G}$ with shift operator $\mathbf{S}$ is JWSS if and only if the following conditions hold:
\begin{itemize}
    \item $\mathbf{X}$ is multivariate wide-sense stationary process in time;
    \item cross covariance matrix of $\underline{x}^{(t_1)}$ and $\underline{x}^{(t_2)}$ for every pair of $t_1$ and $t_2$ is jointly diagonalizable with $\mathbf{S}$.
\end{itemize}
\end{definition}

An example of JWSS process is the joint causal model which is analogous to auto-regressive moving average (ARMA) models for multivariate WSS time-series. The joint causal model is defined as follows:
\begin{equation}
    \underline{x}^{(t)} = \sum_{i=1}^{m} \mathbf{A}_i \, \underline{x}^{(t-i)} + \sum_{j=0}^{q} \mathbf{B}_j \, \underline{\mathcal{E}}^{(t-j)},
    \label{equ: joint_cause}
\end{equation}
where $\mathbf{A}_i$ and $\mathbf{B}_j$ are graph filters and $\underline{\mathcal{E}}$ is the vector of zero mean white noise. This model can be used to predict time-varying graph signals. Given a time-varying graph signal, the parameters of the joint causal model, i.e. $\mathbf{A}_i$s and $\mathbf{B}_j$s, can be learned by minimizing the prediction error residuals which is the subject to the following nonlinear optimization problem:
\begin{equation}
    \argmin_{\{\mathbf{A}_i\}_{i=1}^{m},\, \{\mathbf{B}_j\}_{j=0}^{q}}\ ||\underline{x}^{(t)} - \widetilde{\underline{x}}^{(t)}(\{\mathbf{A}_i\}_{i=1}^{m}, \{\mathbf{B}_j\}_{j=0}^{q})||,
\end{equation}
where $\underline{x}$ is the true signal and $\widetilde{\underline{x}}$ is the output of the model. Generally, this is a computationally expensive optimization problem especially for large graphs.
Loukas \textit{et al.} \cite{loukas2017evolution} proposed an efficient way for finding the optimal filters. They proved that because graph frequency components of a JWSS process are uncorrelated, learning independent univariate prediction models along each graph frequency result in optimal filters for prediction.

While both Gaussian process regression with spectral graph kernels and joint causal model showed promising results in predicting graph signals, they suffer from two major issues.
First, in a typical real-world traffic dataset with a very large graph, the process defined over the graph is often not GWSS, which invalidates their underlying assumption. Secondly, the computational complexity of learning prediction model parameters for a large graph is formidable. To overcome aforementioned issues, we propose to split the whole graph into multiple connected disjoint subgraphs such that the process is stationary in each of them and then fitting prediction model to each individual subgraph. In the next section, first we show that this clustering is a very complex and difficult task and then we discuss our proposed computationally-efficient clustering method.

\section{Methodology}
Here, first in section \ref{sec: challenge}, we show that the definition of GWSS is not \textit{inclusive}, i.e. if a random process defined over a graph is GWSS, the subprocess defined over a subgraph is not necessarily GWSS. This shows the difficulty of clustering a large graph into stationary subgraphs.
Then, in section \ref{sec: scsc} we propose a heuristic graph clustering algorithm to find stationary subgraphs. 
In our analysis, without loss of generality, we assume that each of the time series defined on the vertices of the graph is WSS in time. Therefore we analyze random process over graphs which are not time-varying. Our analysis and proposed algorithm can simply be extended to time varying processes.
Due to lack of space we ignore some of the proofs as they can be found in the references.

\subsection{ The Challenge of Identifying Stationary Subgraphs}\label{sec: challenge}
Many graph clustering algorithms start with some initial clusters, like random clusters or every vertex as a cluster, and then they try to move vertices between clusters with the objective of maximizing a cost function. We first analyze how stationarity changes as a cluster grows or shrinks. This problem can be simplified to analyzing the stationarity of subprocesses defined over subgraphs. Assume a case where a process $\underline{x}$ with covariance matrix $\mathbf{C}$ defined over graph $\mathcal{G}$ with adjacency matrix $\mathbf{A}$ as shift operator is GWSS. The question we want to answer is that if we choose any subgraph of $\mathcal{G}$, does the subprocess defined over it is GWSS too? The ideal case is that the subprocesses defined over all of the possible subgraphs of $\mathcal{G}$ are also GWSS. We call such a process, \textit{superstationary}. More specifically, a superstatioanry random process over graph is defined as follows:

\begin{definition} [Superstationary]
A random process $\underline{x}$ defined over graph $\mathcal{G}$ is graph wide-sense superstatioanary (or superstationary for short) if and only if the following conditions hold:
\begin{itemize}
    \item $\underline{x}$ is graph wide-sense stationary over $\G$;
    \item the subprocesses defined over all of the possible subgraphs of $\mathcal{G}$ are also graph wide-sense stationary.
\end{itemize}
\end{definition}

To analyze the necessary conditions for a random process over graph to be superstationary, first we derive an equivalent definition for GWSS.
\begin{theorem}
Two square matrices are jointly diagonalizable if and only if they commute \cite{horn1990matrix}.
\end{theorem}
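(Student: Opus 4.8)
The plan is to prove the two implications separately; the forward direction (joint diagonalizability $\Rightarrow$ commutativity) is immediate, while the converse carries all the work. Throughout I read ``jointly diagonalizable'' in the sense relevant to Definition~\ref{def: gwss}, namely that there exists a single invertible matrix $P$ conjugating both matrices to diagonal form (in particular both matrices are themselves diagonalizable; commutativity alone cannot force diagonalizability, since a non-diagonalizable matrix commutes with itself).

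For the easy direction, suppose $P^{-1} A P = D_A$ and $P^{-1} B P = D_B$ are both diagonal. Diagonal matrices always commute, so $D_A D_B = D_B D_A$, and conjugating back by $P$ gives $AB = P D_A D_B P^{-1} = P D_B D_A P^{-1} = BA$.

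For the converse, suppose $A$ and $B$ commute and are each diagonalizable. First I would diagonalize $A$, writing $P^{-1} A P = D = \mathrm{diag}(\lambda_1 I_{m_1}, \dots, \lambda_k I_{m_k})$, where $\lambda_1, \dots, \lambda_k$ are the \emph{distinct} eigenvalues of $A$ and each identity block $I_{m_r}$ collects the repeated copies. Set $B' = P^{-1} B P$; since conjugation is an algebra homomorphism, $AB = BA$ implies $D B' = B' D$. The key step is to note that this forces $B'$ to be block diagonal with respect to the same partition: comparing the $(i,j)$ entry of $D B'$ with that of $B' D$ yields $(\lambda_{p(i)} - \lambda_{p(j)})\, B'_{ij} = 0$, where $p(\cdot)$ records which block an index lies in, so $B'_{ij} = 0$ whenever $i$ and $j$ belong to different blocks. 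Hence $B' = \mathrm{diag}(B'_1, \dots, B'_k)$.

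It remains to diagonalize each block. Because $B$, and therefore $B'$, is diagonalizable, its minimal polynomial splits into distinct linear factors; the minimal polynomial of each diagonal block $B'_r$ divides that of $B'$, so each $B'_r$ is diagonalizable, say $Q_r^{-1} B'_r Q_r = \Delta_r$. Letting $Q = \mathrm{diag}(Q_1, \dots, Q_k)$, which conforms to the block structure of $D$, we get $Q^{-1} B' Q = \mathrm{diag}(\Delta_1, \dots, \Delta_k)$ diagonal, while $Q^{-1} D Q = D$ since each $Q_r$ commutes with the scalar block $\lambda_r I_{m_r}$. Thus $PQ$ simultaneously diagonalizes both $A$ and $B$. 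I expect the main obstacle to be precisely the middle step — extracting the block-diagonal structure of $B'$ from commutation with $D$ and observing that the restriction to each eigenblock inherits diagonalizability — with the rest being routine conjugation bookkeeping.
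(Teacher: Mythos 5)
Your proof is correct and is essentially the standard argument from Horn and Johnson's \emph{Matrix Analysis}, which is exactly the reference the paper cites in lieu of giving a proof (the paper explicitly omits proofs that appear in the references). You were also right to flag that the statement is only literally true under the additional hypothesis that both matrices are individually diagonalizable — commutativity alone cannot supply that — which is harmless in the paper's setting since the covariance matrix is symmetric and the shift operators used are diagonalizable.
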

Combining the above theorem with definition of GWSS (Definition \ref{def: gwss}), it is straightforward to show that a random process $\x$ with covariance matrix $\C$ defined over $\G$ is GWSS if and only if $\A \, \C = \C \, \A$. 
To find a similar condition for superstationary random processes, first we review the definition of \textit{supercommuting} matrices from linear algebra literature.

\begin{definition} [Supercommuting matrices \cite{haulk1997supercommute}]
Two square matrices, $\B_1$ and $\B_2$, supercommute if and only if the following conditions hold: 
\begin{itemize}
    \item $\B_1$ and $\B_2$ commute;
    \item each of the principal submatrices of $\B_1$ and their corresponding submatrices of $\B_2$ commute.
\end{itemize}
\end{definition}

Knowing the definitions of superstationary random process and supercommuting matrices, in the following theorem, we derive the necessary and sufficient condition for a process to be superstationary.

\begin{theorem}
A random process $\x$ with covariance matrix $\C$ defined over a graph with adjacency matrix $\A$ as the shift operator is superstationary if and only if $\A$ and $\C$ supercommute.
\label{thm: ssd2}
\end{theorem}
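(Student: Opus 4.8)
The plan is to prove both directions at once by unwinding the two definitions and matching each ingredient of ``superstationary'' with the corresponding ingredient of ``supercommute''. The bridge is an elementary observation: restricting the graph signal to a vertex subset $S$ replaces the adjacency matrix by its principal submatrix indexed by $S$, and replaces the covariance matrix by \emph{its} principal submatrix indexed by the same $S$. The only probabilistic input is the scalar characterization of GWSS already recorded above --- a process with covariance $\C$ over a graph with adjacency matrix $\A$ is GWSS if and only if $\A\C=\C\A$ --- and everything else is linear algebra.

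Next I would fix a vertex subset $S\subseteq V(\G)$ and write $\A_S$, $\C_S$ for the principal submatrices of $\A$, $\C$ obtained by keeping the rows and columns indexed by $S$. Two facts are needed: (i) the subgraph of $\G$ induced on $S$ has adjacency matrix exactly $\A_S$; and (ii) the subprocess $\x_S$ consisting of the coordinates of $\x$ lying in $S$ has covariance matrix exactly $\C_S$, since deleting coordinates of a random vector deletes precisely the corresponding rows and columns of its covariance matrix. Applying the scalar characterization to the induced subgraph on $S$ then shows that the subprocess over $S$ is GWSS if and only if $\A_S\C_S=\C_S\A_S$.

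Chaining the equivalences finishes the proof. By definition, $\x$ is superstationary iff it is GWSS over $\G$ and the subprocess over every subgraph is GWSS; reading ``subgraph'' as vertex-induced and noting that the choice $S=V(\G)$ recovers the ``GWSS over $\G$'' clause, this is exactly the statement that $\A_S\C_S=\C_S\A_S$ for every $S\subseteq V(\G)$. That is precisely the pair of conditions defining supercommutation of $\A$ and $\C$ (they commute, and every principal submatrix of $\A$ commutes with the corresponding submatrix of $\C$). Hence $\x$ is superstationary iff $\A$ and $\C$ supercommute.

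The step I would be most careful about is the phrase ``all possible subgraphs''. The argument relies on the induced-subgraph reading, for which the subgraph's adjacency matrix is genuinely a principal submatrix of $\A$; if edge-deleted subgraphs on the full vertex set were also admitted, their adjacency matrices would not be submatrices of $\A$ and the clean correspondence with supercommuting matrices would break. So I would state explicitly that subgraphs are taken in the induced sense --- which is in any case the natural reading of ``the subprocess defined over a subgraph'', since one simply selects a subset of vertices and reads off the joint law of those coordinates.
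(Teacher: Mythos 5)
Your proof is correct and follows essentially the same route as the paper's: identify the induced subgraph's adjacency matrix and the subprocess's covariance matrix as corresponding principal submatrices of $\A$ and $\C$, apply the commuting characterization of GWSS to each, and match the resulting conditions with the definition of supercommuting matrices. Your explicit caveat that subgraphs must be read in the vertex-induced sense is a worthwhile clarification that the paper leaves implicit.
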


\begin{proof}
The adjacency matrix of a subgraph is a submatrix of $\A$. Also, covariance matrix of a subprocess is a submatrix of $\C$. By definition of superstationarity, subprocesses defined over all of the possible subgraphs of $\G$ are GWSS, hence all of the corresponding submatrices of $\A$ and $\C$ commute.
\end{proof}
While the theorem above shows the necessary and sufficient condition for superstationarity, directly checking it is computationally expensive specially for large graphs. To that end, we show that given an adjacency matrix, only a class of covariance matrices are superstationary. To continue our analysis, first we review the definition of irreducible matrix and then we study the matrices that supercommute with irreducible matrices.
\begin{definition}[Irreducible matrix \cite{jeffery2000matrices}]
A matrix $\B$ is irreducible if and only if the directed graph whose weighted adjacency matrix is $\B$, is strongly connected.
\label{def: irr}
\end{definition}

\begin{theorem}
Any matrix that supercommutes with irreducible square matrix $\B_{N\times N}$ is a linear combination of $\B$ and identity matrix $\mathbf{I}_N$ \cite{haulk1997supercommute}.
\label{thm: spc_irr}
\end{theorem}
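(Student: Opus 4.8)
The plan is to push all the information in the supercommutation relation down to the $2\times2$ and $3\times3$ principal submatrices of $\B$ and of a supercommuting matrix $\mathbf{M}$, distill it into a few scalar identities, and then exploit the strong connectivity of the support digraph of $\B$ (Definition~\ref{def: irr}) to spread a single proportionality constant over all of $\mathbf{M}$. The case $N=1$ is trivial, so I would assume $N\ge 2$.

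First I would record the low-order consequences of supercommutation. Writing $\B=(b_{ij})$ and $\mathbf{M}=(m_{ij})$, commutation of the $2\times2$ principal submatrices indexed by a pair $\{i,j\}$ is equivalent to $b_{ij}m_{ji}=b_{ji}m_{ij}$ together with $m_{ij}(b_{ii}-b_{jj})=b_{ij}(m_{ii}-m_{jj})$ and its $i\leftrightarrow j$ transpose. Next, expanding the $(i,k)$ entry of the commutator of the $3\times3$ principal submatrices on $\{i,j,k\}$ and cancelling the two terms that already vanish by the $2\times2$ identity for $\{i,k\}$ leaves the ``two-path'' identity $b_{ij}\,m_{jk}=m_{ij}\,b_{jk}$ for all distinct $i,j,k$. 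In particular, whenever $i\to j\to k$ is a directed path in the support digraph of $\B$ (so $b_{ij},b_{jk}\neq0$), the ratios $m_{ij}/b_{ij}$ and $m_{jk}/b_{jk}$ coincide.

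The heart of the argument is the propagation step. Non-edges are handled directly: if $b_{jk}=0$ but $b_{kj}\neq0$, the identity $b_{ij}m_{ji}=b_{ji}m_{ij}$ applied to $\{j,k\}$ forces $m_{jk}=0$; and if $b_{jk}=b_{kj}=0$, then (by strong connectivity) $j$ has an in-neighbour $i\notin\{j,k\}$, and the two-path identity on $\{i,j,k\}$ gives $b_{ij}m_{jk}=0$, hence $m_{jk}=0$. For the edges, I would put $\alpha_{ij}:=m_{ij}/b_{ij}$ over the support digraph; the two-path identity equates $\alpha_{ij}$ with $\alpha_{jk}$ along consecutive edges of any path through three distinct vertices, while $b_{ij}m_{ji}=b_{ji}m_{ij}$ equates $\alpha_{ij}$ with $\alpha_{ji}$ on a $2$-cycle. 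A short combinatorial argument---first that all out-edges of a vertex carry a common value (via a shared in-neighbour, falling back on a $2$-cycle relation when the only in-neighbours are heads of the out-edges in question), then that this value is unchanged along each edge, then invoking strong connectivity---shows all $\alpha_{ij}$ equal one constant $\alpha$, so $m_{ij}=\alpha\,b_{ij}$ for every $i\neq j$. I expect this bookkeeping to be the main obstacle: the delicate point is verifying that the chain of triangle relations reaches every edge without stalling at low-degree vertices or short cycles, and this is exactly where irreducibility is indispensable (the statement is false for reducible $\B$, e.g.\ $\B=\mathrm{diag}(1,2,3)$).

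Finally, I would set $\mathbf{N}:=\mathbf{M}-\alpha\B$. By the previous step $\mathbf{N}$ is diagonal, and it commutes with $\B$ because $\mathbf{M}$ does (supercommutation includes ordinary commutation). Comparing $(i,j)$ entries of $\mathbf{N}\B$ and $\B\mathbf{N}$ yields $n_{ii}b_{ij}=b_{ij}n_{jj}$, so the diagonal entries of $\mathbf{N}$ agree across every edge; strong connectivity then forces them all equal to a common $\beta$, giving $\mathbf{N}=\beta\,\mathbf{I}_N$ and therefore $\mathbf{M}=\alpha\B+\beta\,\mathbf{I}_N$, as claimed.
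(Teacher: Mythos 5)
Your proof is correct. Note that the paper itself gives no argument for this theorem --- it is imported verbatim from the cited reference, and the authors explicitly defer such proofs to the literature --- so there is no in-paper derivation to compare against; your write-up is a genuine, self-contained addition. The structure is sound: the $2\times2$ principal-submatrix commutators give $b_{ij}m_{ji}=b_{ji}m_{ij}$ and $m_{ij}(b_{ii}-b_{jj})=b_{ij}(m_{ii}-m_{jj})$, and the $(i,k)$ entry of the $3\times3$ commutator on $\{i,j,k\}$ does reduce, after cancelling via the $\{i,k\}$ identity, to the two-path relation $b_{ij}m_{jk}=m_{ij}b_{jk}$; I verified both computations. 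Your treatment of non-edges is right (including the observation that for $N=2$ strong connectivity rules out the doubly-zero case, so the three-index argument is never needed there). For the propagation step, the bookkeeping you flag as delicate can be made essentially trivial: along any directed walk avoiding self-loops, two consecutive edges $(a,b),(b,c)$ satisfy either the two-path relation (if $a\neq c$) or the $2$-cycle relation (if $a=c$), so $\alpha$ is constant along every walk, and strong connectivity joins any two edges by a walk --- this subsumes your out-edge/in-neighbour case analysis. The final step (the diagonal remainder $\mathbf{N}=\mathbf{M}-\alpha\B$ commutes with $\B$, hence $n_{ii}=n_{jj}$ across every edge, hence $\mathbf{N}=\beta\,\mathbf{I}_N$ by connectivity) is exactly right, and your counterexample $\B=\mathrm{diag}(1,2,3)$ correctly shows irreducibility cannot be dropped. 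One cosmetic caveat: a supercommuting matrix need not be symmetric or a covariance matrix, and your argument correctly never assumes this, which makes it applicable as stated in the theorem.
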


Next, we derive the sufficient and necessary condition for a process defined over a strongly connected graph to be superstationary. 
\begin{theorem}
A random process $\x$ with covariance matrix $\C$ defined over a strongly connected graph of order $N$ with adjacency matrix $\A$ as the shift operator is superstationary if and only if $\C$ is a linear combination of $\A$ and identity matrix $\mathbf{I}_N$.
\label{thm: ss_irr}
\end{theorem}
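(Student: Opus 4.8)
The plan is to obtain this theorem essentially for free from the machinery already assembled: Theorem~\ref{thm: ssd2} (superstationarity of $\x$ $\Leftrightarrow$ $\A$ and $\C$ supercommute), Definition~\ref{def: irr} (the graph is strongly connected $\Leftrightarrow$ $\A$ is irreducible), and Theorem~\ref{thm: spc_irr} (the only matrices supercommuting with an irreducible matrix are linear combinations of it and the identity). So the proof is a short chain of implications in the forward direction and a one-line verification in the converse.

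For the forward direction, I would argue: assume $\x$ is superstationary. By Theorem~\ref{thm: ssd2}, $\A$ and $\C$ supercommute. Since the graph is strongly connected, Definition~\ref{def: irr} says $\A$ is irreducible. Applying Theorem~\ref{thm: spc_irr} with $\B=\A$, every matrix that supercommutes with $\A$---in particular $\C$---is of the form $\alpha\A+\beta\mathbf{I}_N$ for some scalars $\alpha,\beta$, which is exactly the claim.

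For the converse, suppose $\C=\alpha\A+\beta\mathbf{I}_N$; I would verify that $\A$ and $\C$ supercommute and then invoke Theorem~\ref{thm: ssd2}. First, $\A\,\C=\alpha\A^2+\beta\A=\C\,\A$, so they commute. Second, for any index set $S\subseteq\{1,\dots,N\}$, restricting to the principal submatrix on $S$ is a linear operation that sends $\mathbf{I}_N$ to $\mathbf{I}_{|S|}$, so the corresponding submatrix of $\C$ equals $\alpha\A_S+\beta\mathbf{I}_{|S|}$, and the same one-line computation gives $\A_S\,\C_S=\C_S\,\A_S$. Hence all principal submatrices of $\A$ commute with the corresponding submatrices of $\C$, so $\A$ and $\C$ supercommute, and by Theorem~\ref{thm: ssd2}, $\x$ is superstationary. (It is worth noting in passing that, since $\C$ is a covariance matrix it is automatically positive semi-definite, so in effect only those $(\alpha,\beta)$ for which $\alpha\A+\beta\mathbf{I}_N$ is PSD can occur; this does not affect the stated equivalence.)

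The main obstacle is really not in this proof at all: once Theorem~\ref{thm: spc_irr} is taken as given, the statement is immediate. The only points requiring care are matching the graph-theoretic hypothesis to the linear-algebraic one via Definition~\ref{def: irr}, and recalling---as in the proof of Theorem~\ref{thm: ssd2}---that the adjacency and covariance matrices attached to an induced subgraph are precisely the principal submatrices appearing in the definition of supercommuting, so nothing is lost in moving between "all subgraphs" and "all principal submatrices."
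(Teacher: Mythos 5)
Your proof is correct and follows exactly the paper's route: it combines Theorem~\ref{thm: ssd2}, Definition~\ref{def: irr}, and Theorem~\ref{thm: spc_irr}, which is precisely what the paper's one-line proof does. Your write-up is actually more careful than the paper's, since you explicitly verify the converse direction (that $\alpha\A+\beta\mathbf{I}_N$ supercommutes with $\A$) rather than declaring it straightforward.
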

\begin{proof}
The adjacency matrix of a strongly connected graph is an irreducible matrix. Therefore, given Theorem \ref{thm: ssd2}, Definition \ref{def: irr} and Theorem \ref{thm: spc_irr}, the proof is straightforward.
\end{proof}

Assuming that the desired graph $\G$ is strongly connected\footnote{Undirected graphs are strongly connected as long as they are connected.}, which is a realistic assumption in many of the real-world networks, unless $\C$ is a linear combination of $\A$ and identity matrix, there is at least a subgraph such that the subprocess defined over it is non-stationary. This is a major challenge for identifying stationary subgraphs. Suppose that we start with a subgraph containing only one vertex and add one vertex at a time until we cover the whole graph and at each step we check whether the subprocess is stationary or not. We know that at some step the subprocess is non-stationary. But being non-stationary at a step does not necessarily mean that the subprocess is non-stationary in the next steps as we know that the process in the last step is stationary. This means that moving one vertex at a time between clusters is \textit{not} optimal and the algorithm may not converge at all. Next, we show the difference of a stationary process and a superstationary process with an example.

\begin{figure}[t]
\centering
\includegraphics[width=0.65\columnwidth, keepaspectratio]{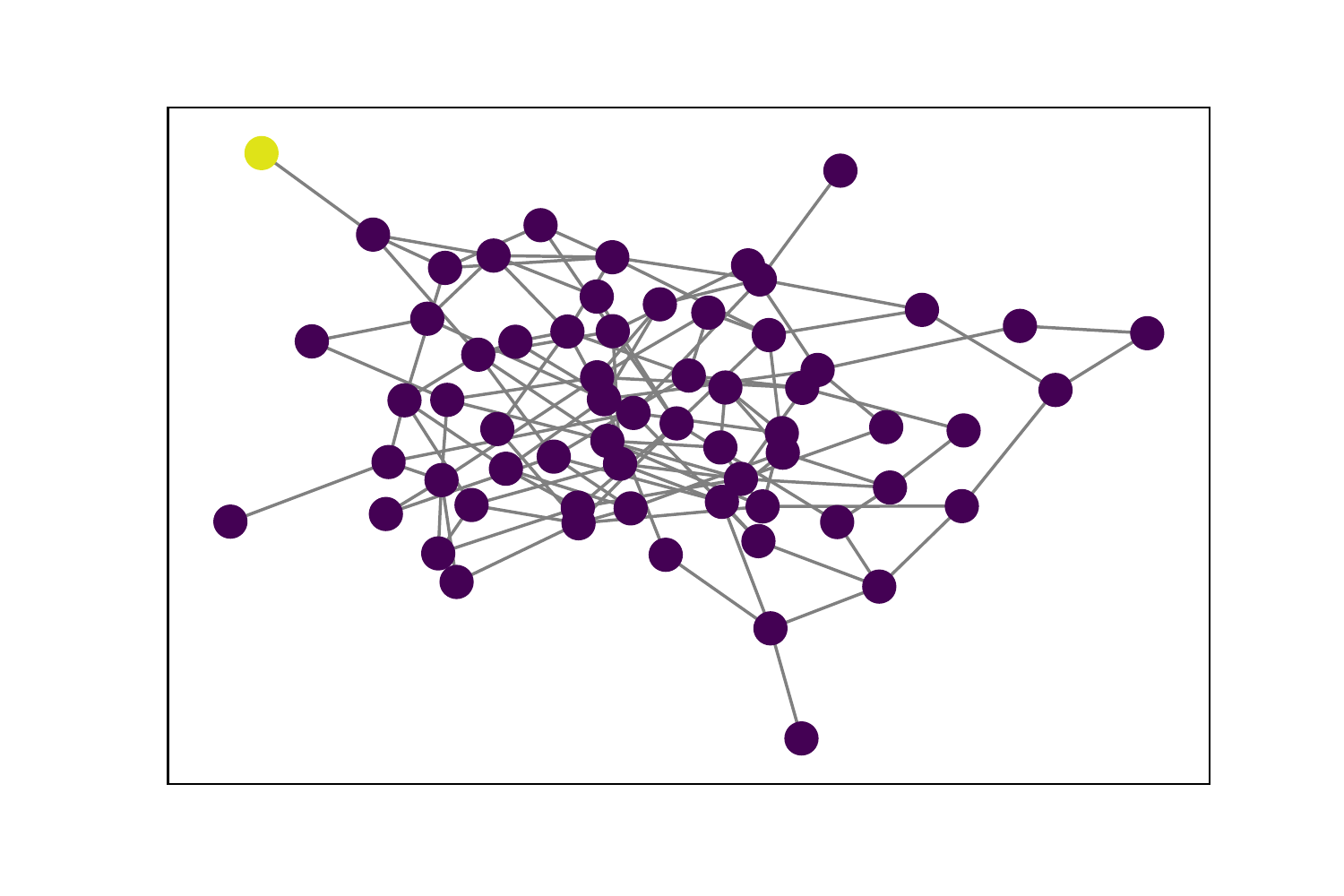}
\caption{The simulated Erd\"{o}s-Renyi graph.}
\label{fig: erdos}
\end{figure}

We generate a random Erd\"{o}s-Renyi graph with 64 nodes with edge probability of 0.06 (shown in Fig. \ref{fig: erdos}). We form two covariance matrices, a stationary and a superstationary, as follows: 
\begin{enumerate}
    \item We eigendecompose the adjacency matrix and assume that the covariance matrices of stationary and superstationary processes have the same set of eigenvectors as adjacency matrix. Let us denote the eigenvalues of the adjacency matrix by $\{\lambda_i\}_{i=1}^{N}$.
    \item We choose the eigenvalues of the stationary covariance matrix to be quadratic function (shown in Fig. \ref{fig: eigs}). More specifically, $\lambda_{i}^{s} = 2.146\times10^{-3}\ i^{2} + 1.073\times10^{-5}$ for $i \in \{1, \dots, 64\}$.
    \item We choose the the eigenvalues of the superstationary covariance matrix to be linear function of eigenvalues of the adjacency matrix (shown in Fig. \ref{fig: eigs}). More specifically, $\lambda_{i}^{ss} = 0.5\ \lambda_{i} + 2$ for $i \in \{1, \dots, 64\}$.
\end{enumerate}
\vspace{3pt}

We start with a subgraph consisting of a randomly chosen node (the yellow node in Fig. \ref{fig: erdos}) and its immediate neighbors and compute the stationarity ratio of the processes over this subgraph. We keep increasing the size of the subgraph by adding one-hop neighbors of the nodes in the subgraph (at the current step) to it and compute the stationarity ratio at each step. The results are shown in Fig. \ref{fig: stratios}. As we expected, the superstationary process is completely stationary on all of subgraphs while the stationary ratio of the stationary process could decrease to less than 0.7 for some of the subgraphs.
This, indeed, shows the challenge of stationary graph clustering and proves that moving one vertex at a time between clusters could cause the algorithm to diverge from the optimal solution.
In the next subsection we propose a heuristic approach to overcome this problem.

\begin{figure}[t]
\includegraphics[clip,
trim=0cm 0cm 1cm 1cm,
width=\columnwidth, 
height=0.25\textheight,
keepaspectratio]{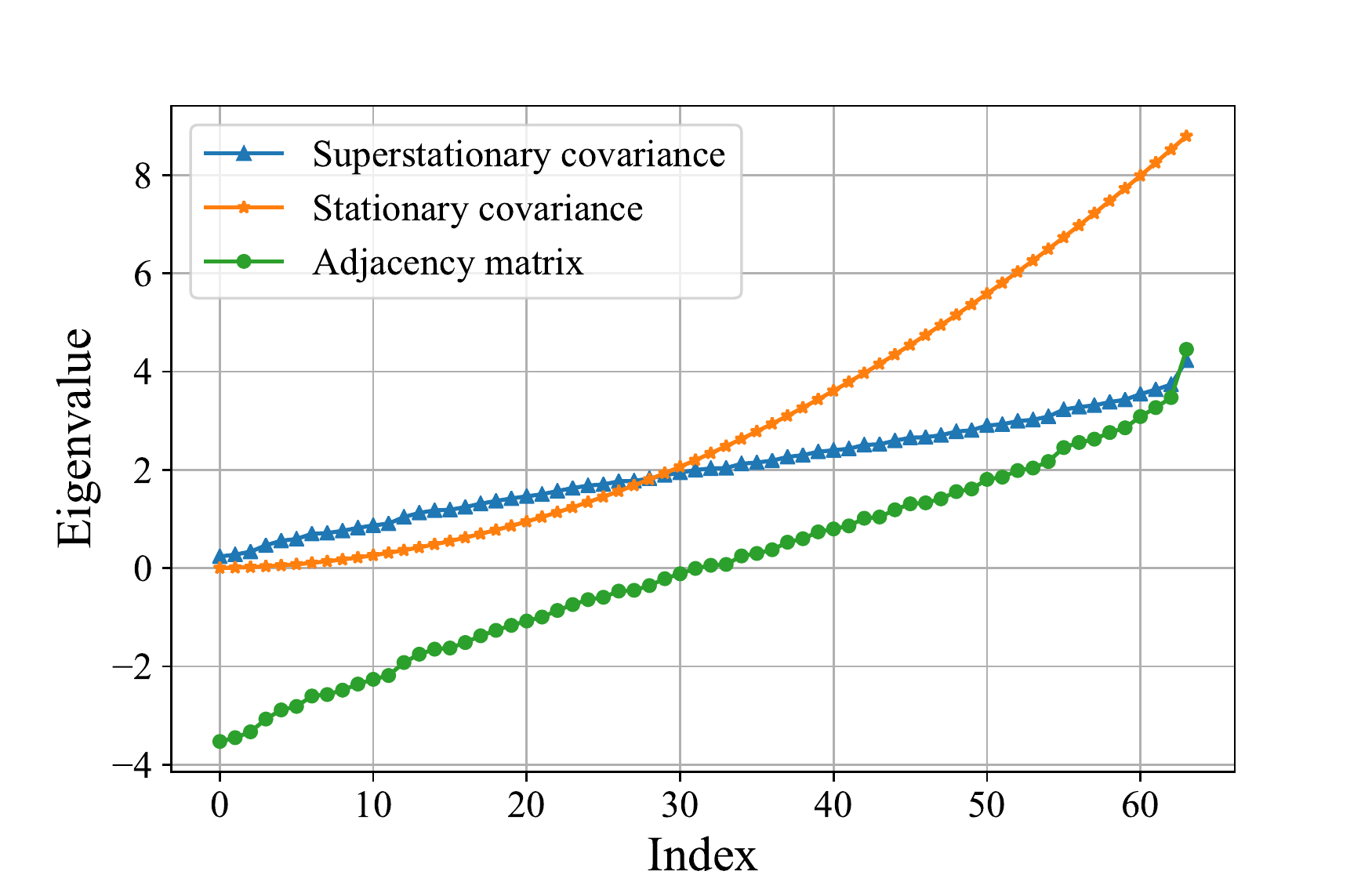}
\caption{Eigenvalues of the simulated processes and the adjacecny matrix of the graph.}
\label{fig: eigs}
\end{figure}

\begin{figure}[b]
\includegraphics[clip,
trim=0cm 0cm 1cm 1cm,
width=\columnwidth, 
height=0.25\textheight, keepaspectratio]{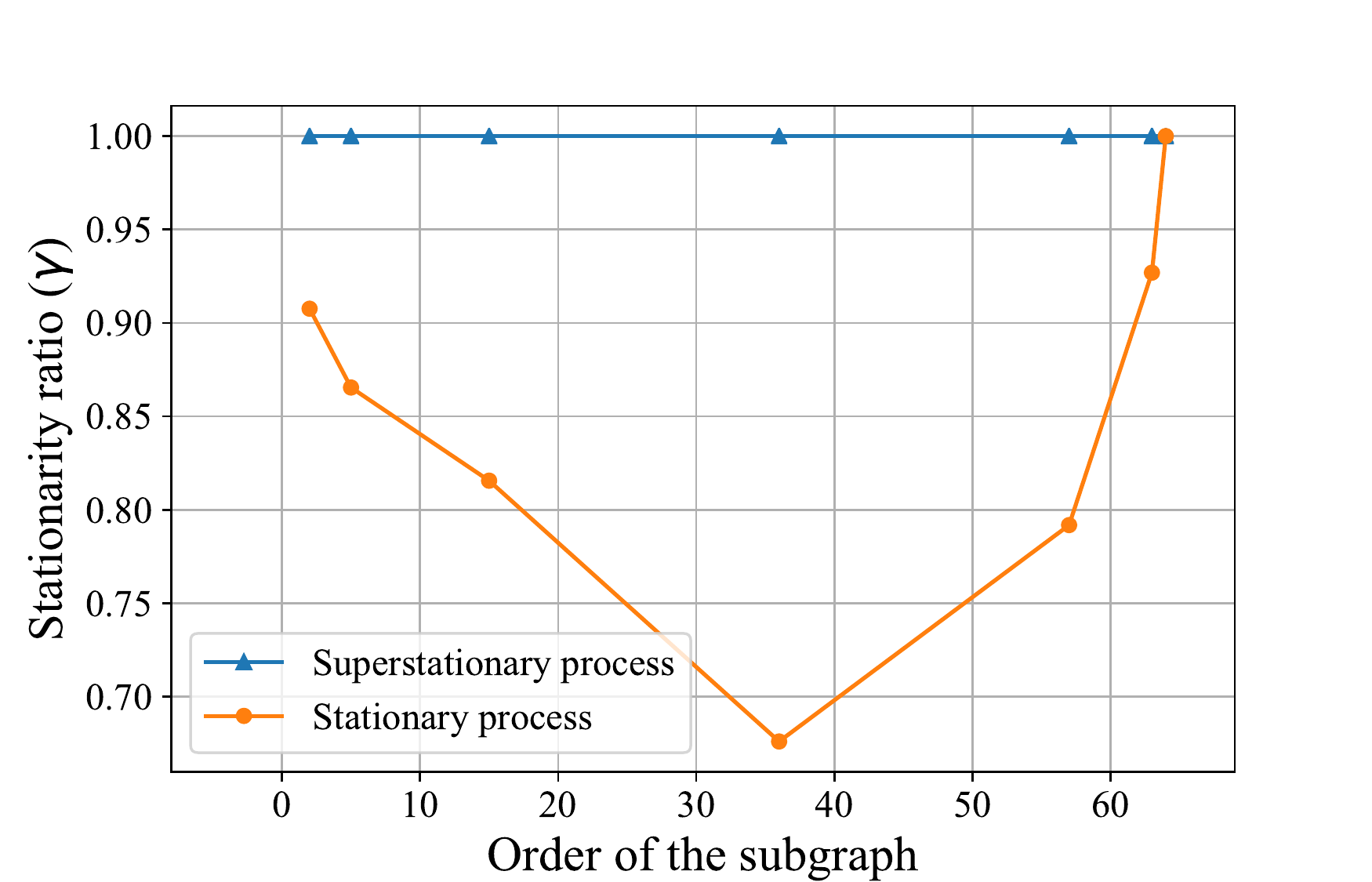}
\caption{Stationary ratio of the simulated processes over subgraphs.}
\label{fig: stratios}
\end{figure}

\begin{figure*}[!t]\label{fig: congpat}
\centering
\includegraphics[width=\textwidth,
keepaspectratio]{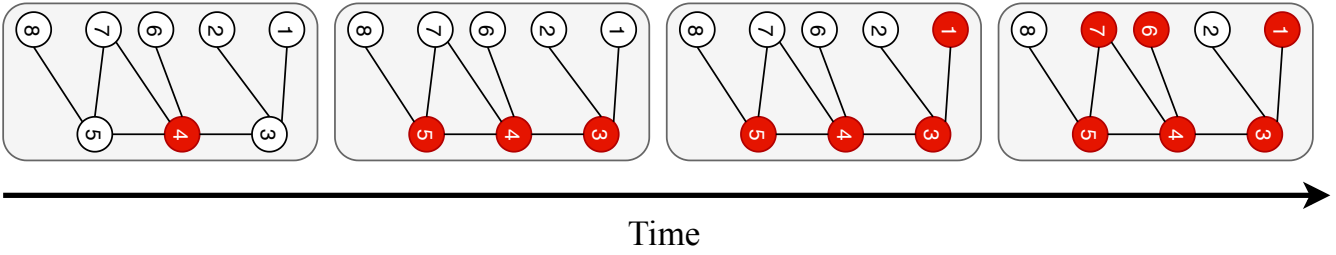}
\caption{Visualization of an active component of a graph. At the fist snapshot, node 4 becomes active and as time goes by the activity spreads in the network. The red nodes at each snapshot represents active nodes. The subgraph containing nodes 1, 3, 4, 5, 6 and 7 is an active component of the network.}
\label{fig:congpat}
\end{figure*}

\subsection{Stationary Connected Subgraph Clustering}\label{sec: scsc}
We are interested in modeling a non-stationary random process over a graph using a piecewise stationary process over the same graph. 
Therefore, the \textit{stationary graph clustering} problem is defined as follows:
\begin{problem}
Given a graph $\G$ and a non-stationary stochastic process $\x$ defined over $\G$, partition the graph into $k$ disjoint connected subgraphs $\{\G_1,\G_2,\dots,\G_k\}$ such that each of the subprocesses $\{\x_1,\x_2,\dots,\x_k\}$ defined over subgraphs are GWSS.
\end{problem}

We propose a heuristic algorithm, stationary connected subgraph clustering (SCSC), to solve the problem above. To the best of our knowledge, we address this problem for the first time.

As we discussed in the previous subsection, adding one vertex at a time to identify stationary subgraphs is problematic. Therefore, we propose a clustering algorithm whose inputs are \textit{active components} rather than vertices. Active components (ACs) are spatial localization of activity patterns made by the process over the graph (see Fig. \ref{fig:congpat}). Spatial spreading of congestion in transportation networks and spreading pattern of rumor in social networks are two examples of active components.
First, we clarify the mathematical definition of a vertex being \textit{active}. If magnitude of the signal defined over a vertex exceeds a threshold, i.e. $\x[i] \geq \alpha$, we say that the vertex is active. For example, if the travel time of a road exceeds a threshold it means that it is congested or active.
Knowing the mathematical definition of active vertex, active component is defined as follows:
\begin{definition} [Active component]
We say that the vertex $i$ and vertex $j$ are in the same active component if and only if at some time $t$ both of the following conditions hold: 
\begin{itemize}
    \item $(i,j) \  \text{or} \ (j,i)$ is an edge of the graph;
    \item vertex $i$ is active at time $t$ and vertex $j$ is active at time $t$ or $t-1$ or vice versa.
\end{itemize}
\end{definition}
ACs can be viewed as subgraphs on which a diffusion process takes place. We already know that diffusion processes are stationary processes. Therefore, we expect that the subprocesses defined over ACs to be stationary.
To obtain ACs from historical observations, we propose the active components extraction iterative algorithm.

\begin{algorithm}[t]
  \caption{Active Components Extraction}
  \begin{algorithmic}[1]
      \State $\textbf{Input: } \G,\ \alpha,\ \Y\in \mathbb{R}^{N\times T}$
      \State \textbf{Notations:}
      \State \ \ \ \ $\G = \text{network graph}$
      \State $\ \ \ \ \alpha = \text{threshold on signal determining active vertices}$
      \State $\ \ \ \ \Y = \text{historical observation matrix}$
      \State $\textbf{Initialize:}$  
      \State $\ \ \ \ \text{AC} \gets \{\}, \text{AC}_t \gets \{\}, \text{AC}_{t-1} \gets \{\}$
      \For{$i = 1:T$}
        \State $\mathcal{R} \gets \text{argwhere}(Y(:,i)<\alpha)$
        \State $\G_{sub} \gets \text{remove }\mathcal{R}\text{ and connected edges from }\G$
        \State $\text{AC}_t \gets \text{connected components of } \G_{sub}$
        \For{$j = 1:\textit{len}(\text{AC}_{t-1})$}
            \State $\text{indicator} \gets \text{False}$
            \For{$k = 1:\textit{len}(\text{AC}_{t})$}
                \State $\text{AC}_t(k) \gets \text{one hop expansion of AC}_t(k)$
                \If{$\text{(AC}_t(k)\ \cap\ \text{AC}_{t-1}(j)) \neq \varnothing$}
                    \State $\text{AC}_t(k) \gets \text{(AC}_{t}(k)\ \cup\ \text{AC}_{t-1}(j))$
                    \State $\text{indicator} \gets \text{True}$
                    \EndIf
                \EndFor
                \If{$\text{indicator == False}$}
                    \State $\text{append AC}_{t-1}(j)\text{ to AC}$
                    \EndIf
            \EndFor
      \EndFor
      \State \textbf{return} AC 
  \end{algorithmic}
  \label{alg: csp}
\end{algorithm}

\begin{algorithm}[t]
  \caption{Stationary Connected Subgraph Clustering}\label{scp}
  \begin{algorithmic}[1]
      \State $\textbf{Input: } \G,\ \gamma_{th},\ \C,\ \text{AC},\ \theta,\ \D$
      \State \textbf{Notations:}
      \State \ \ \ \ $\G = \text{network graph}$
      \State $\ \ \ \ \gamma_{th} = \text{threshold on stationarity ratio}$
      \State $\ \ \ \ \C = \text{covariance matrix}$
      \State $\ \ \ \ \text{AC} = \text{set of active components}$
      \State $\ \ \ \ \theta = \text{number of clusters}$
      \State $\ \ \ \ \D = \text{matrix of distances between active components}$ 
      \State $\textbf{Initialize:}$  
      \State $\ \ \ \ \text{CLS} \gets \text{AC},\ \text{NL} \gets \{\},\ n \gets |\text{AC}|$
      \While{$n > \theta$}
        \State $d_{min} \gets \text{min}\ (\D)$
        \State $[i,\ j] \gets \argmin\ (\D)$
        \If{$d_{min} < 2 \textbf{ and } [i,\ j] \nsubseteq \text{NL}$}
            \State $\text{CT} \gets (\text{AC}(i)\ \cup\ \text{AC}(j))$
            \State $\G_{sub} \gets \text{induced subgraph of }\G\text{ for nodes in CT}$
            \State $\C_{sub} \gets \C[\text{CT},\ \text{CT}]$
            \State $\gamma \gets \text{compute stationarity ratio using }\C_{sub}\ \& \ \G_{sub}$
            \If{$\gamma \geq \gamma_{th}$}
                \State remove AC$(i)$ and AC$(j)$ from CLS
                \State insert CT to CLS
                \State update $\D$ in single linkage manner
                \State $n \gets n-1$
                \Else \ append $[i,\ j]$ to NL
            \EndIf
        \EndIf
        \If{$\lvert\text{NL}\rvert == \lvert\text{CLS}\rvert$}
            \State break
        \EndIf
      \EndWhile
      \State \textbf{return} CLS 
  \end{algorithmic}
  \label{alg: scsc}
\end{algorithm}

The overview of Algorithm \ref{alg: csp} is that at each time step, first we create the \textit{active} graph by removing the inactive nodes and all the edges connected to these nodes from the whole network graph. Then we go through time and if activity patterns at time $t-1$ are propagating to time $t$, the active components are merged into one. In fact, the algorithm finds weakly connected components of the strong product of spatial graph and time-series graph when spatio-temporal nodes with signal amplitude less than $\alpha$ and all of connected edges to them are removed.

Before we describe the second part of the algorithm in which the adjacent ACs are merged and form clusters, 
we review the measure of stationarity defined in \cite{loukas2016stationary}. First, we form the following matrix 
\begin{equation}
\P = \U^T \, \C \, \U,
\end{equation}
where $\U$ is the matrix of eigenvectors of the shift operator and $\C$ is the covariance matrix of the random process.
Stationarity ratio, $\gamma$, of a random process over graph is defined as follows:
\begin{equation}
    \gamma = \frac{||\text{diag}(\P)||_2}{||\P||_F},
\end{equation}
where $||.||_F$ represents the Frobenius norm and diag(.) denotes the vector of diagonal elements of the matrix. In fact, $\gamma$ is a measure of diagonality of matrix $\P$. If a process is GWSS then eigenvectors of covariance matrix and shift operator of the graph are the same hence $\P$ is diagonal and $\gamma$ equals to one. Diagonal elements of $\P$ form the power spectral density of the GWSS process. 

Another definition that we need to move forward to the next step is the distance between two ACs. The distance between two ACs is the minimum of the shortest path distances between all pairs of nodes $(v_1,v_2)$ where $v_1 \in  \text{AC}_1$ and $v_2 \in  \text{AC}_2$. 
We note that two ACs could have a common node, hence the distance between two ACs could be zero.
Knowing all necessary definitions, the psudocode for SCSC algorithm is described in Algorithm 2.

The intuition behind SCSC is that it is most likely that adjacent ACs belong to the same diffusion process. SCSC merges adjacent ACs if after merger the stationarity ratio is larger than some threshold.
This definition makes sense since it is repeatedly observed that activity on one vertex easily causes or serves as a result of activity in the adjacent vertex with some time difference.
Therefore, we expect that the process defined over the output clusters of SCSC to be stationary. 
SCSC is a hierarchical clustering algorithm with some extra conditions. The overload caused by these conditions are small compared the complexity of hierarchical clustering. The conditions only needs eigendecomposition of small matrices (usually less than $50\times50$) which is negligible. 

While SCSC algorithm, as defined in Algorithm \ref{alg: scsc}, extracts subgraphs that are GWSS, with a simple modification to the algorithm, we can identify subgraphs that are JWSS. Assuming that a time varying random process $\X$ is WSS in time, the necessary condition for $\X$ to be JWSS is that all of the lagged auto-covariance matrices are jointly diagonalizable with $\S$. Hence, lines 17 to 19 in Algortihm \ref{alg: scsc} (the merger condition of clusters in SCSC) are replaced by the following lines:
\begin{empheq}[left=\empheqlvert\;\;]{align*}
    &\textbf{for}\, \, \, l = 0:q \, \, \, \textbf{do}\\
    &\, \, \, \, \, \, \, \, \C^{(l)}_{sub} \gets \C^{(l)}[\text{CT},\ \text{CT}] \\
    &\, \, \, \, \, \, \, \, \gamma_l \gets \text{compute stationarity ratio using }\C^{(l)}_{sub}\ \& \ \G_{sub}\\
    &\textbf{if}\, \, \, \gamma_l \geq \gamma_{th} \, \, \, \text{for}\, \, \, l = 0:q \, \, \, \textbf{then}
\end{empheq}
where $\C^{(l)}$ is the cross covariane between $\x^{(t)}$ and $\x^{(t-l)}$ and $q$ is a hyper-parameter denoting maximum lag. Note that both $\C^{(l)}$ and $q$ are inputs of the algorithm.

\subsection{Application to Traffic Prediction}
To examine our model, we use our proposed approach for travel time prediction in road networks. First, we use line graph of transportation network to map roads into vertices of a directed graph. Then, we use travel time index (TTI) \cite{pu2011analytic} to extract active components from historical data.
TTI of a road segment is defined as its current travel time divided by the free flow travel time of the road segment or, equivalently, the free flow speed divided by the current average speed. TTI can be interpreted as a measure of severity of congestion in a road segment.

We also propose using the joint causal model \eqref{equ: joint_cause} for each cluster separately.
To capture the dynamic and non-linear behavior of traffic, we propose using a piecewise linear model as prediction model for time series along each graph frequency. More specifically, threshold auto-regressive (TAR) \cite{chan1990testing,chan1986estimating} models are piece-wise linear extension of linear AR models. The TAR model assumes that a time-series can have several regimes and its behavior is different in each regime. The regime change could be triggered either by past values of the time-series itself (self exciting TAR \cite{petruccelli1986portmanteau}) or some exogenous variable. This model is a good candidate to model the traffic behavior; once a congestion happens, the dynamics of time-series changes. A TAR model with $l$ regimes is defined as follows:
\begin{equation*}
y_{t}=
\begin{cases}
	\sum_{i=1}^{m_1} a_{i}^{(1)}y_{(t-i)}+b^{(1)}\epsilon_t, & \beta_0 < z_t < \beta_1\\
	\sum_{i=1}^{m_2} a_{i}^{(2)}y_{(t-i)}+b^{(2)}\epsilon_t, & \beta_1 < z_t < \beta_2\\
	&\vdots\\
	\sum_{i=1}^{m_l} a_{i}^{(l)}y_{(t-i)}+b^{(l)}\epsilon_t, & \beta_{l-1} < z_t < \beta_{l}\\
\end{cases}
\end{equation*}
where $z$ is the exogenous variable. A natural choice of exogenous variable for traffic is TTI, because it shows severity of congestion.

\section{Numerical Results And Discussion}
\subsection{Dataset}
The traffic data used in this study originated from the Dallas-Forth Worth area, with a graph comprising 4764 road segments in the highway network. The data represented time-series of average travel times as well as average speed on each segment at a two-minute granularity over January--March 2013. 
The data was used under licence from commercial data provider, which produces this data applying proprietary methods to a number of primary sources including location data supplied by mobile device applications. Fig. \ref{fig: map} shows the map of the road network. Missing values form dataset were imputed by moving average filters. In all experiments, we used 70\% of data for training, 10\% for validation and 20\% for testing.

\begin{figure}[!t]
\centering
\includegraphics[width=\columnwidth, height=0.5\textheight, keepaspectratio]{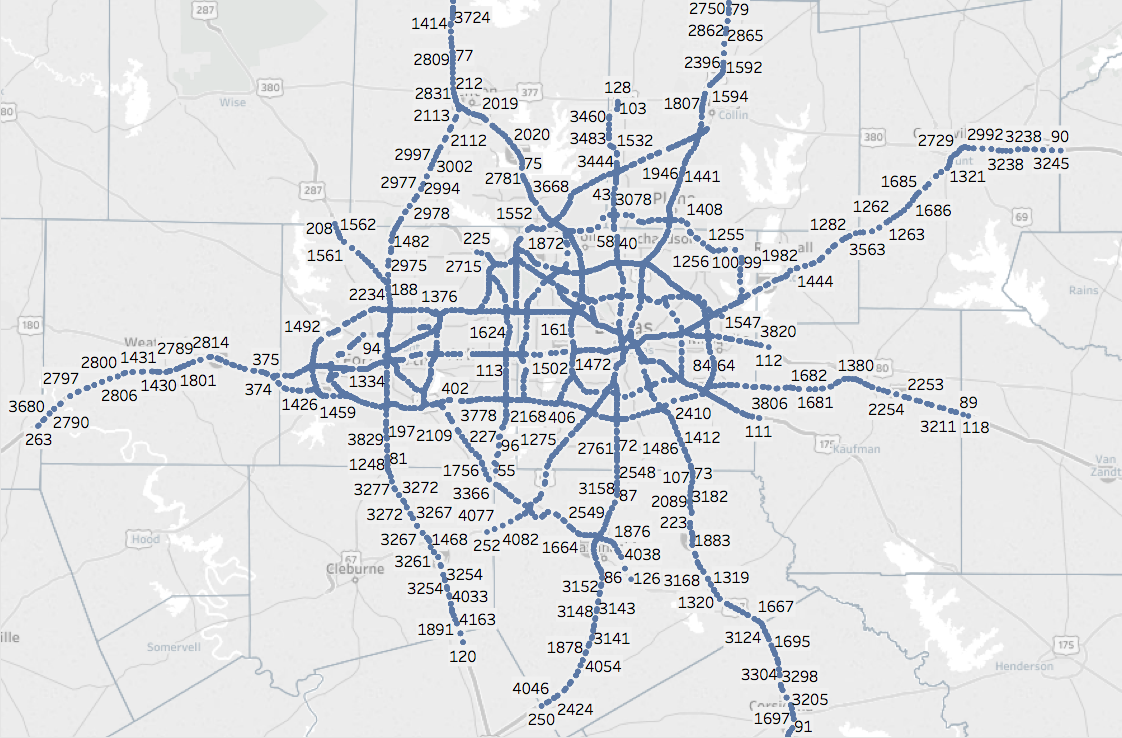}
\caption{The map of the road network in Dallas.}
\label{fig: map}
\end{figure}

\subsection{Experimental Setup}
\subsubsection{Our Proposed Method}
After removing daily seasonality from time series, difference transformation was used to make the time series along each vertex WSS. Augmented Dicky-Fuller test was used to check for stationarity in time.
We used 1.7 as threshold ($\alpha = 1.7$) for TTI to detect congestion in a road. 117621 active components were extracted from the training data using Algorithm \ref{alg: csp}. We eliminated ACs with less than 5 vertices which reduced the number of ACs to 52494. 

In our clustering algorithm, we used combinatorial Laplacian matrix of directed graphs \cite{chung2005laplacians}, which is defined as follows
\begin{equation*}
    \L = \frac{1}{2}(\D_{\text{out}} + \D_{\text{in}} - \A - \A^{T}),
\end{equation*}
as the shift operator. In the equation above, $\D_{\text{out}}$ and $\D_{\text{in}}$ represent the out-degree and in-degree matrices, respectively. The in-degree (out-degree) matrix is a diagonal matrix whose $i$-th diagonal element is equal to the sum of the weights of all the edges entering (leaving) vertex $i$. 

We initiated the SCSC algorithm (Algorithm \ref{alg: scsc}) by setting minimum stationarity ratio to 0.9 and number of clusters to 150. The number of vertices in clusters are between 24 to 61. Prediction models were learned for each cluster independently in the graph frequency domain. TAR models consist of three different regimes.
The exogenous threshold variable for our proposed prediction models, joint causal model with TAR (JCM-TAR), is the sum of TTI values of all nodes in a cluster at each time step. This value shows how congested the whole cluster is. We implemented our clustering algorithm in Python and used tsDyn package \cite{dinarzo2019tsdyn} in R 
to learn the parameters of (threshold) auto-regressive models.

\begin{figure}[t]
\centering
\includegraphics[clip,
height=0.27\textheight, keepaspectratio]{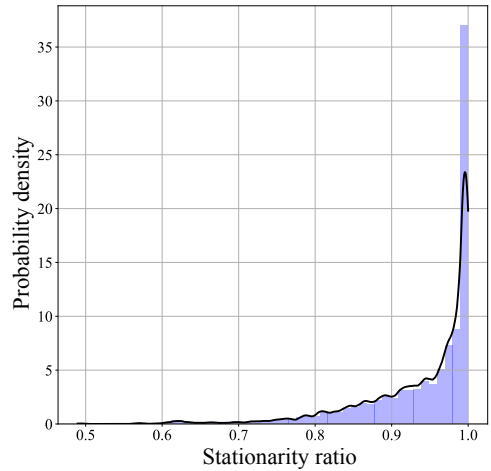}
\caption{The histogram of stationarity ratio of subprocesses defined over active components.}
\label{fig: pd}
\end{figure}

\begin{table*}[t]
\centering
\label{tab:la_comparison}
\caption{Traffic prediction performance of our proposed method and baselines.}
\resizebox{1.3\columnwidth}{!}{
\begin{tabular}{c|c|ccccc}

\toprule
$T$ & Metric & ARIMA  & DCRNN & STGCN & JCM-AR  & JCM-TAR\\ \hline
\midrule
\multirow{3}{*}{10 min}& MAE  & 2.7601 & 1.1764 & 1.0437 & 2.0520 & 1.2732 \\ 
& RMSE & 5.3204 & 2.7553  & 2.6370 & 4.8620  & 2.9993 \\ 
& MAPE & 5.10\%  & 2.68\% & 2.44\% & 4.28\% & 2.85\% \\ 
\hline 
\multirow{3}{*}{14 min}& MAE  & 3.3427 & 1.3837 & 1.1586 & 3.1187   & 1.7415 \\ 
& RMSE & 6.5432     & 3.1186   & 2.8682 & 5.3481   & 3.5231 \\ 
& MAPE & 8.32\%  & 3.22\% & 3.11\% & 7.84\% & 3.51\% \\ 
\hline
\multirow{3}{*}{20 min}& MAE  & 4.9971    &  1.5948  & 1.3911 & 4.3873 &  1.9926 \\ 
& RMSE & 10.8734 &  3.5287  & 3.3406 & 10.0381  &  4.0154 \\ 
& MAPE & 13.65\%  &  3.83\% & 3.67\% & 12.75\% & 4.33\% \\ 
\bottomrule

\end{tabular}
}
\label{tbl: res}
\end{table*}

\subsubsection{Baselines}
To establish accuracy benchmarks, we consider four other prediction models. The first scheme is to build independent auto-regressive integrated moving average (ARIMA) \cite{mccleary1980applied} models for time series associated with each road. After removing daily seasonality form data, ARIMA models with five moving average lags and one auto-regressive lag were learned. This naive scheme ignores the spatial correlation of adjacent roads. The second benchmark scheme is a joint causal model with non-adaptive AR models (JCM-AR) in graph frequency domain of each cluster. We used the same clusters as our proposed method discussed in previous subsection. 

The third scheme is the diffusion convolutional recurrent neural network (DCRNN) \cite{li2018diffusion}. This deep learning method uses long short-term memory (LSTM) cells on top of diffusion convolutional neural network for traffic prediction. The architecture of DCRNN includes 2 hidden RNN layers, each layer includes 32 hidden units. Spatial dependencies are captured by dual random walk filters with 2-hop localization. The last baseline is the spatio-temporal graph convolutional neural network (STGCN) \cite{yu2018stpc}. This method uses graph convolutional layers to extract spatial features and deploys one dimensional convolutional layers to model temporal behavior of the traffic. We used two graph convolutional layers with 32 Chebyshev filters. We also used two temporal gated convolution layers with 64 filters each. We used the Python implementations of DCRNN and STGCN provided by the authors.

\subsection{Discussion}
To check our hypothesis that active components are stationary, we look at the probability density of stationarity ratios of data defined over of active components (showed in Fig. \ref{fig: pd}). We note that most of the ACs have stationarity ratio of more than 0.8 which approves our hypothesis. 

We also compare our proposed SCSC with spectral clustering \cite{ng2002spectral} and normalized cut clustering \cite{dhillon2004kernel} algorithms. The average stationarity ratio of clusters for spectral clustering and normalized cut clustering are 0.4719 and 0.3846, respectively, while our SCSC algorithm produces clusters with stationarity ratio of more than 0.9. SCSC shows superior performance in finding stationary clusters as other clustering algorithms have different objectives.

Table \ref{tbl: res} shows the prediction accuracy of the proposed method and baselines for 10 minutes, 14 minutes and 20 minutes ahead traffic forecasting.
The metrics used to evaluate the accuracy are mean absolute error (MAE), mean absolute percentage error (MAPE) and root mean squared error (RMSE); see \cite{li2018diffusion}.

ARIMA shows the worst accuracy because it does not capture spatial correlation between neighboring roads and it cannot capture dynamic temporal behavior of traffic using a linear model. JCM-AR improves the the performance of ARIMA because it uses spatial correlation implicitly. The difference between these two models shows the importance of spatial dependencies. The improvement is more substantial for long term predictions.
JCM-TAR model with three regimes shows the second best performance. This shows that temporal dynamic behavior of traffic is changing once a congestion happens in the network because a congestion can change the statistics of the process drastically.
We note that the performance of JCM-TAR is very close to DCRNN especially for short term traffic prediction. However, the small enhancement of DCRNN comes with a huge increase in model complexity and scalability.

To learn the parameters of DCRNN and STGCN, we used a server with 2 GeForce GTX 1080 Ti GPUs which took more than 22 and 11 hours to converge, respectively. Doubling the number of hidden units in the RNN layers of DCRNN to 64 units, the server ran out of memory. However, it took less than 3 hours for our method to converge using a computer with 2.8GHz Intel Corei7 CPU and 16GB of RAM to simulate our method. The active components extraction took less than an hour and the SCSC algorithm converged in less than 2 hours. Estimation of our proposed prediction models are very fast because we used univariate piecewise linear models both in time and space which can be implemented completely in parallel.

Another advantage of our method is the complexity of parameter tuning.
While the parameters in our model are interpretive and most of them do not need any tuning, deep method could be very sensitive to network architecture. Extensive search over different parameters is the common way in deep learning to find the best architecture. It took us days to tune the parameters of DCRNN.
Deep learning models can predict the traffic with very high accuracy \cite{lv2015deep,yu2018stpc,yu2017deep,wu2016deep,ma2017deep,li2018diffusion} but most of them ignore complexity and scalability of the model to real world big datasets. 
However, a carefully designed prediction model, for short term traffic prediction, can perform as well as deep learning with less complexity and better scalability.

\section{Conclusion}
In this work, we introduced a new method for modeling non-stationary random processes over graphs. We proposed a novel graph clustering algorithm, called SCSC, in which a large graph is partitioned into smaller disjoint clusters such that the process defined over each cluster is assumed to be stationary and independent from other clusters. Independent prediction models for each cluster can be deployed to predict the process. Numerical results showed that combining our piecewise stationary model with a simple piecewise linear prediction model shows comparable accuracy to graph based deep learning methods for traffic prediction task. More specifically, the accuracy of our method is only 0.41 lower in mean absolute percentage error to the state-of-the-art graph based deep learning method, while the computational complexity of our model is substantially smaller, with computation times of 3 hours on a commodity laptop, compared with 22 hours on a 2-GPU array for deep learning methods.

\section{Acknowledgements}
This material is based upon work supported by the National Science Foundation under Grant ENG-1839816.

\bibliographystyle{IEEEtran}
\bibliography{ijcai19}

\begin{thebibliography}{10}
\providecommand{\url}[1]{#1}
\csname url@samestyle\endcsname
\providecommand{\newblock}{\relax}
\providecommand{\bibinfo}[2]{#2}
\providecommand{\BIBentrySTDinterwordspacing}{\spaceskip=0pt\relax}
\providecommand{\BIBentryALTinterwordstretchfactor}{4}
\providecommand{\BIBentryALTinterwordspacing}{\spaceskip=\fontdimen2\font plus
\BIBentryALTinterwordstretchfactor\fontdimen3\font minus
  \fontdimen4\font\relax}
\providecommand{\BIBforeignlanguage}[2]{{%
\expandafter\ifx\csname l@#1\endcsname\relax
\typeout{** WARNING: IEEEtran.bst: No hyphenation pattern has been}%
\typeout{** loaded for the language `#1'. Using the pattern for}%
\typeout{** the default language instead.}%
\else
\language=\csname l@#1\endcsname
\fi
#2}}
\providecommand{\BIBdecl}{\relax}
\BIBdecl

\bibitem{papoulis1965probability}
A.~Papoulis, \emph{Probability, random variables, and stochastic processes},
  ser. McGraw-Hill series in systems science.\hskip 1em plus 0.5em minus
  0.4em\relax McGraw-Hill, 1965.

\bibitem{dicky1979unitroot}
D.~A. Dickey and W.~A. Fuller, ``Distribution of the estimators for
  autoregressive time series with a unit root,'' \emph{Journal of the American
  Statistical Association}, vol.~74, no. 366a, pp. 427--431, 1979.

\bibitem{Brockwell2016ts}
P.~J. Brockwell and R.~A. Davis, \emph{Stationary Processes}.\hskip 1em plus
  0.5em minus 0.4em\relax Cham: Springer International Publishing, 2016, pp.
  39--71.

\bibitem{Cryer2008ts}
J.~D. Cryer and N.~Kellet, \emph{Models For Stationary Time Series}.\hskip 1em
  plus 0.5em minus 0.4em\relax New York, NY: Springer New York, 2008, pp.
  55--85.

\bibitem{shuman2013emerging}
D.~I. Shuman, S.~K. Narang, P.~Frossard, A.~Ortega, and P.~Vandergheynst, ``The
  emerging field of signal processing on graphs: extending high-dimensional
  data analysis to networks and other irregular domains,'' \emph{IEEE Signal
  Processing Magazine}, vol.~30, no.~3, pp. 83--98, 2013.

\bibitem{moura2014bigdata}
A.~Sandryhaila and J.~M.~F. Moura, ``Big data analysis with signal processing
  on graphs: Representation and processing of massive data sets with irregular
  structure,'' \emph{IEEE Signal Processing Magazine}, vol.~31, no.~5, pp.
  80--90, Sept 2014.

\bibitem{ortega2018graph}
A.~Ortega, P.~Frossard, J.~Kova{\v{c}}evi{\'c}, J.~M. Moura, and
  P.~Vandergheynst, ``Graph signal processing: Overview, challenges, and
  applications,'' \emph{Proceedings of the IEEE}, vol. 106, no.~5, pp.
  808--828, 2018.

\bibitem{loukas2016stationary}
A.~Loukas and N.~Perraudin, ``Stationary time-vertex signal processing,''
  \emph{arXiv preprint arXiv:1611.00255}, 2016.

\bibitem{marques2017stationary}
A.~G. Marques, S.~Segarra, G.~Leus, and A.~Ribeiro, ``Stationary graph
  processes and spectral estimation,'' vol.~65, no.~22, Nov 2017, pp.
  5911--5926.

\bibitem{perraudin2017tstationary}
N.~Perraudin, A.~Loukas, F.~Grassi, and P.~Vandergheynst, ``Towards stationary
  time-vertex signal processing,'' in \emph{2017 IEEE International Conference
  on Acoustics, Speech and Signal Processing (ICASSP)}, March 2017, pp.
  3914--3918.

\bibitem{perraudin2017sspog}
N.~Perraudin and P.~Vandergheynst, ``Stationary signal processing on graphs,''
  \emph{IEEE Transactions on Signal Processing}, vol.~65, no.~13, pp.
  3462--3477, July 2017.

\bibitem{smola2003kernels}
A.~J. Smola and R.~Kondor, ``Kernels and regularization on graphs,'' in
  \emph{Learning theory and kernel machines}.\hskip 1em plus 0.5em minus
  0.4em\relax Springer, 2003, pp. 144--158.

\bibitem{sollich2009kernelgp}
P.~Sollich, M.~Urry, and C.~Coti, ``Kernels and learning curves for gaussian
  process regression on random graphs,'' in \emph{Advances in Neural
  Information Processing Systems 22}, Y.~Bengio, D.~Schuurmans, J.~D. Lafferty,
  C.~K.~I. Williams, and A.~Culotta, Eds.\hskip 1em plus 0.5em minus
  0.4em\relax Curran Associates, Inc., 2009, pp. 1723--1731.

\bibitem{zhu2005nonparametric}
J.~Zhu, J.~Kandola, Z.~Ghahramani, and J.~D. Lafferty, ``Nonparametric
  transforms of graph kernels for semi-supervised learning,'' in \emph{Advances
  in neural information processing systems}, 2005, pp. 1641--1648.

\bibitem{urry2013random}
M.~J. Urry and P.~Sollich, ``Random walk kernels and learning curves for
  gaussian process regression on random graphs,'' \emph{The Journal of Machine
  Learning Research}, vol.~14, no.~1, pp. 1801--1835, 2013.

\bibitem{avrachenkov2018similarities}
K.~Avrachenkov, P.~Chebotarev, and D.~Rubanov, ``Similarities on graphs:
  Kernels versus proximity measures,'' \emph{European Journal of
  Combinatorics}, 2018.

\bibitem{loukas2017evolution}
A.~Loukas, E.~Isufi, and N.~Perraudin, ``Predicting the evolution of stationary
  graph signals,'' in \emph{2017 51st Asilomar Conference on Signals, Systems,
  and Computers}, Oct 2017, pp. 60--64.

\bibitem{kim2005analyzing}
H.-M. Kim, B.~K. Mallick, and C.~C. Holmes, ``Analyzing nonstationary spatial
  data using piecewise gaussian processes,'' \emph{Journal of the American
  Statistical Association}, vol. 100, no. 470, pp. 653--668, 2005.

\bibitem{gramacy2005bayesian}
R.~B. Gramacy, ``Bayesian treed gaussian process models,'' Ph.D. dissertation,
  University of California, Santa Cruz, 2005.

\bibitem{appel1983adaptive}
U.~Appel and A.~V. Brandt, ``Adaptive sequential segmentation of piecewise
  stationary time series,'' \emph{Information sciences}, vol.~29, no.~1, pp.
  27--56, 1983.

\bibitem{last2008detecting}
M.~Last and R.~Shumway, ``Detecting abrupt changes in a piecewise locally
  stationary time series,'' \emph{Journal of multivariate analysis}, vol.~99,
  no.~2, pp. 191--214, 2008.

\bibitem{girault2017local}
B.~Girault, S.~S. Narayanan, and A.~Ortega, ``Local stationarity of graph
  signals: insights and experiments,'' in \emph{Wavelets and Sparsity XVII},
  vol. 10394.\hskip 1em plus 0.5em minus 0.4em\relax International Society for
  Optics and Photonics, 2017, p. 103941P.

\bibitem{serrano2018graph}
A.~Serrano, B.~Girault, and A.~Ortega, ``Graph variogram: A novel tool to
  measure spatial stationarity,'' in \emph{2018 IEEE Global Conference on
  Signal and Information Processing (GlobalSIP)}.\hskip 1em plus 0.5em minus
  0.4em\relax IEEE, 2018, pp. 753--757.

\bibitem{moura2014freqanal}
A.~Sandryhaila and J.~M.~F. Moura, ``Discrete signal processing on graphs:
  Frequency analysis,'' \emph{IEEE Transactions on Signal Processing}, vol.~62,
  no.~12, pp. 3042--3054, June 2014.

\bibitem{kunegis2009learning}
J.~Kunegis and A.~Lommatzsch, ``Learning spectral graph transformations for
  link prediction,'' in \emph{Proceedings of the 26th Annual International
  Conference on Machine Learning}.\hskip 1em plus 0.5em minus 0.4em\relax ACM,
  2009, pp. 561--568.

\bibitem{tremblay2018design}
N.~Tremblay, P.~Gon{\c{c}}alves, and P.~Borgnat, ``Design of graph filters and
  filterbanks,'' in \emph{Cooperative and Graph Signal Processing}.\hskip 1em
  plus 0.5em minus 0.4em\relax Elsevier, 2018, pp. 299--324.

\bibitem{sandryhaila2013discrete}
A.~Sandryhaila and J.~M. Moura, ``Discrete signal processing on graphs: Graph
  filters,'' in \emph{2013 IEEE International Conference on Acoustics, Speech
  and Signal Processing}.\hskip 1em plus 0.5em minus 0.4em\relax IEEE, 2013,
  pp. 6163--6166.

\bibitem{liu2018filter}
J.~Liu, E.~Isufi, and G.~Leus, ``Filter design for autoregressive moving
  average graph filters,'' \emph{IEEE Transactions on Signal and Information
  Processing over Networks}, vol.~5, no.~1, pp. 47--60, 2018.

\bibitem{horn1990matrix}
R.~Horn, R.~Horn, and C.~Johnson, \emph{Matrix Analysis}.\hskip 1em plus 0.5em
  minus 0.4em\relax Cambridge University Press, 1990.

\bibitem{haulk1997supercommute}
C.~C. Haulk, J.~Drew, C.~R. Johnson, and J.~H. Tart, ``Characterization of
  supercommuting matrices,'' \emph{Linear and Multilinear Algebra}, vol.~43,
  no. 1-3, pp. 35--51, 1997.

\bibitem{jeffery2000matrices}
A.~Jeffrey and D.~Zwillinger, ``13 - matrices and related results,'' in
  \emph{Table of Integrals, Series, and Products (Sixth Edition)}, sixth
  edition~ed.\hskip 1em plus 0.5em minus 0.4em\relax San Diego: Academic Press,
  2000, pp. 1059 -- 1064.

\bibitem{pu2011analytic}
W.~Pu, ``Analytic relationships between travel time reliability measures,''
  \emph{Transportation Research Record}, vol. 2254, no.~1, pp. 122--130, 2011.

\bibitem{chan1990testing}
K.-S. Chan \emph{et~al.}, ``Testing for threshold autoregression,'' \emph{The
  Annals of Statistics}, vol.~18, no.~4, pp. 1886--1894, 1990.

\bibitem{chan1986estimating}
K.~S. Chan and H.~Tong, ``On estimating thresholds in autoregressive models,''
  \emph{Journal of time series analysis}, vol.~7, no.~3, pp. 179--190, 1986.

\bibitem{petruccelli1986portmanteau}
J.~Petruccelli and N.~Davies, ``A portmanteau test for self-exciting threshold
  autoregressive-type nonlinearity in time series,'' \emph{Biometrika},
  vol.~73, no.~3, pp. 687--694, 1986.

\bibitem{chung2005laplacians}
F.~Chung, ``Laplacians and the {C}heeger inequality for directed graphs,''
  \emph{Annals of Combinatorics}, vol.~9, no.~1, pp. 1--19, 2005.

\bibitem{dinarzo2019tsdyn}
\BIBentryALTinterwordspacing
A.~F. Di~Narzo, J.~L. Aznarte, and M.~Stigler, \emph{dplyr: Nonlinear Time
  Series Models with Regime Switching}, 2019, r package version 0.9-48.1.
  [Online]. Available: \url{https://CRAN.R-project.org/package=tsdyn}
\BIBentrySTDinterwordspacing

\bibitem{mccleary1980applied}
R.~McCleary, R.~A. Hay, E.~E. Meidinger, and D.~McDowall, \emph{Applied time
  series analysis for the social sciences}.\hskip 1em plus 0.5em minus
  0.4em\relax Sage Publications Beverly Hills, CA, 1980.

\bibitem{li2018diffusion}
Y.~Li, R.~Yu, C.~Shahabi, and Y.~Liu, ``Diffusion convolutional recurrent
  neural network: Data-driven traffic forecasting,'' in \emph{International
  Conference on Learning Representations}, 2018.

\bibitem{yu2018stpc}
B.~Yu, H.~Yin, and Z.~Zhu, ``Spatio-temporal graph convolutional networks: A
  deep learning framework for traffic forecasting,'' in \emph{Proceedings of
  the 27th International Joint Conference on Artificial Intelligence}, ser.
  IJCAI'18.\hskip 1em plus 0.5em minus 0.4em\relax AAAI Press, 2018, pp.
  3634--3640.

\bibitem{ng2002spectral}
A.~Y. Ng, M.~I. Jordan, and Y.~Weiss, ``On spectral clustering: Analysis and an
  algorithm,'' in \emph{Advances in neural information processing systems},
  2002, pp. 849--856.

\bibitem{dhillon2004kernel}
I.~S. Dhillon, Y.~Guan, and B.~Kulis, ``Kernel k-means: spectral clustering and
  normalized cuts,'' in \emph{Proceedings of the tenth ACM SIGKDD international
  conference on Knowledge discovery and data mining}.\hskip 1em plus 0.5em
  minus 0.4em\relax ACM, 2004, pp. 551--556.

\bibitem{lv2015deep}
Y.~Lv, Y.~Duan, W.~Kang, Z.~Li, and F.~Wang, ``Traffic flow prediction with big
  data: A deep learning approach,'' \emph{IEEE Transactions on Intelligent
  Transportation Systems}, vol.~16, no.~2, pp. 865--873, April 2015.

\bibitem{yu2017deep}
R.~Yu, Y.~Li, C.~Shahabi, U.~Demiryurek, and Y.~Liu, ``Deep learning: A generic
  approach for extreme condition traffic forecasting,'' in \emph{Proceedings of
  the 2017 SIAM International Conference on Data Mining}.\hskip 1em plus 0.5em
  minus 0.4em\relax SIAM, 2017, pp. 777--785.

\bibitem{wu2016deep}
Y.~Wu and H.~Tan, ``Short-term traffic flow forecasting with spatial-temporal
  correlation in a hybrid deep learning framework,'' \emph{arXiv preprint
  arXiv:1612.01022}, 2016.

\bibitem{ma2017deep}
X.~Ma, Z.~Dai, Z.~He, J.~Ma, Y.~Wang, and Y.~Wang, ``Learning traffic as
  images: a deep convolutional neural network for large-scale transportation
  network speed prediction,'' \emph{Sensors}, vol.~17, no.~4, p. 818, 2017.

\end{thebibliography}

\end{document}